\documentclass[aps,prl,twocolumn,showpacs]{revtex4-2}
\usepackage[english]{babel}
\usepackage[utf8]{inputenc}
\usepackage[colorlinks=true,citecolor=blue]{hyperref}
\usepackage{physics}
\usepackage{amssymb,amsmath,amsbsy,bm,amscd}
\usepackage{amsthm}
\usepackage[shortlabels,inline]{enumitem}
\usepackage{epsf,epsfig}
\usepackage{graphicx}
\usepackage{algorithm}
\usepackage{verbatim}
\usepackage{algpseudocode}
\usepackage{url}
\usepackage{natbib}
\usepackage{hyperref}
\hypersetup{
    colorlinks,
    linkcolor={red},
    citecolor={blue},
    urlcolor={blue}
}

\begin{document}
\title{Frozen-Tree Sampling Refutes Quantum Advantage of Random Circuit Sampling}
\author{Sangchul Oh}
\email[]{sangchul.oh@siu.edu}
\affiliation{School of Physics and Applied Physics, Southern Illinois University Carbondale, 
IL 62906, USA}
\date{\today}

\begin{abstract}
Random circuit sampling of bitstrings from a Haar-random quantum state is widely believed to be 
classically intractable, and has therefore been implemented as a primary benchmark for 
demonstrating quantum advantage. Here, we challenge this premise by proposing an efficient 
classical frozen-tree sampling algorithm that exploits the conditional scale invariance of 
Haar-random quantum states [Oh, arXiv:2602.19448]. The frozen-tree sampler draws bitstrings of $n$
qubits in $O(n)$ time per sample. Moreover, its output probability $p_F(x)$ is statistically 
identical to the probability $p_C(x)$ of a random quantum circuit, since both are independent 
instances of the same Dirichlet distribution. Consequently, no statistical test acting on samples 
alone can distinguish the classical frozen-tree sampler from a quantum random circuit. 
The claimed quantum advantage of random circuit sampling therefore does not withstand scrutiny: 
its hardness lies not in sampling from the Dirichlet distribution, which is classically efficient, 
but in identifying a specific circuit realization.
\end{abstract}
\maketitle

\paragraph*{Introduction}--- Quantum advantage, the outperformance of quantum computers over 
classical digital computers on certain tasks, is considered one of the most important milestones 
in quantum computation. Random circuit sampling (RCS) is regarded as a primary benchmark for
demonstrating quantum advantage on current noisy intermediate-scale quantum computers. Claims of 
quantum advantage in RCS have recently been reported using superconducting 
qubits~\cite{Arute2019,Wu2021,Zhu2022,Morvan2024,Gao2025} and ion-trap 
qubits~\cite{Liu2025,DeCross2025,Ransford2026}. Operationally, RCS is the task of sampling bitstrings 
from a Haar-random quantum state, hereafter referred to as a random quantum state, generated 
by a random quantum circuit~\cite{Boixo2018}. RCS is believed to be classically intractable because 
a random quantum state is highly entangled~\cite{Page1993a} and appears too chaotic for a classical 
algorithm to exploit any pattern or structure~\cite{Bouland2019,Aaronson2020,Hangleiter2023,Movassagh2023}. 
Statistical properties of output bitstrings such as the exponential distribution, the linear cross-entropy 
benchmark~\cite{Boixo2018,Arute2019,Dalzell2024,Gao2024}, heavy output generation~\cite{Aaronson2017} 
and anti-concentration~\cite{Aaronson2013,Bremner2016,Hangleiter2018} 
have been proposed as evidence for the quantum advantage of RCS.

In this paper, we challenge the premise of quantum advantage in RCS by introducing frozen-tree 
sampling, which exploits the exact conditional scale invariance of a random quantum 
state~\cite{Oh2026,Oh2026b}, as shown in Fig.~\ref{Fig:1}. We prove that the probability of 
finding bitstrings of a random quantum state can be represented by a binary tree 
with a precise recursive structure: each bit is drawn from a Beta-distributed conditional probability 
determined by the preceding bits. We present a frozen-tree sampler that classically samples bitstrings 
from an $n$-qubit random quantum state in $O(n)$ time per sample. Both the probability $p_C(x)$ 
of a random quantum circuit and the probability $p_F(x)$ of the frozen-tree sampler are independent 
realizations of the Dirichlet vector characterizing a random quantum state, and are therefore 
statistically identical. It follows that no statistical verification method can serve as evidence of 
quantum advantage in RCS.

\begin{figure}[t]
\includegraphics[width=0.45\textwidth]{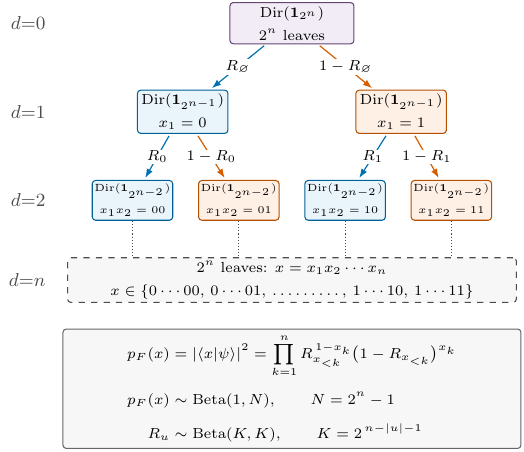}
\caption{Binary-tree representation of the probability $p(x)=\abs{\braket{x}{\psi}}^2
=\prod_k R_{x<k}^{1-x_k} (1-R_{x<k})^{x_k}$ of finding a bitstring $x = x_1\cdots x_n$ for a random quantum 
state $\ket{\psi}$ of $n$ qubits. The leaf probability $p(x)$ is uniquely determined by the product of branch 
ratios $R_{u}$ at depth $d=|u|=k-1$ with prefix $u=x_{<k}= x_1\cdots x_{k-1}$, where $R_u\sim\text{Beta}(K,K)$ 
with $K=2^{n-|u|-1}$. Each subtree is conditionally scale invariant and statistically identical to 
the whole system~\cite{Oh2026}.
\label{Fig:1}}
\end{figure}

\paragraph*{Dirichlet distribution of a Haar-random quantum state}---
A random pure quantum state of $n$ qubits in an $N=2^n$ dimensional Hilbert space~\cite{Wootters1990} 
is written as
\begin{align}
\ket{\psi} = \sum_{x=0}^{N-1} c_x \ket{x}\,,
\end{align}
where the bitstring $x=x_1\cdots x_n$ with $x_i\in \{0,1\}$ runs from $0$ to $N-1$ and each 
amplitude $c_x$ is sampled from the complex normal distribution, $c_x \sim {\cal CN}(0,1)$.  
The probability of finding bitstring $x$ is given by 
$p(x) = \abs{c_x}^2/\sum_y \abs{c_y}^2$. The probability vector $\boldsymbol{p}= (p(x))_{x=0,\dots N-1}$ 
follows a flat Dirichlet distribution on the $N-1$ simplex~\cite{Bengtsson2007}
\begin{align}
\left(p(x)\right) \sim \text{Dir}(1,\dots,1)_{2^n}\,.
\end{align}
The marginal distribution $p(x)$ follows a Beta distribution, $p\sim \text{Beta}(1,N-1)$, which becomes 
an exponential distribution in the rescaled random variable $Np$ in the limit of large 
$N$~\cite{Kus1988,Haake1990}. A quantum computer realizes a random quantum state by applying a circuit $C$,
chosen randomly according to the Haar measure, to an input state, $\ket{\psi} = C\ket{0^{n}}$, and performs 
a measurement in the computational basis to sample a bitstring $x$ from the probability distribution 
$p_C(x) = \abs{\bra{x}C\ket{0^n}}^2=\abs{c_x}^2/\sum_y|c_y|^2$.  Note that a random circuit $C$ yields a single 
realization of the Dirichlet distribution, $\boldsymbol{p}_C = (p_C(x)) \sim \text{Dir}(1,\dots,1)$. 

\paragraph*{Binary-tree representation of $p(x)$}---
The output bitstrings of RCS appear so random that no pattern or correlation is readily discernible.
Indeed, they pass the NIST statistical test for randomness~\cite{NIST2010,Oh2022a,Oh2022b,Oh2023}. Consider, 
for example, two $n=5$ bitstrings, $01101$ and $01100$, obtained from RCS. Both share 
the same four-bit prefix $u=x_1x_2x_3x_4 = 0110$, but differ in the final bit, $x_5 \in \{0,1\}$. 
The central question of this paper is: \emph{given the prefix $u=x_1x_2x_3x_4=0110$, what is the probability 
of obtaining $x_5 =0$ or $x_5=1$}? To answer it, we exploit the conditional scale invariance of a random 
quantum state established in Refs.~\cite{Oh2026,Oh2026b}. A random quantum state is not disordered 
but conceals an exact self-similar symmetry, as shown in Fig.~\ref{Fig:1}. This makes it possible
to express $p(x)$ as a binary tree with an analytic split probability at each depth and 
to sample bitstrings classically in $O(n)$ per sample.

\newtheorem{theorem}{Theorem}
\begin{theorem}[Tree Representation] \label{Frozen_tree}
For a random pure quantum state $\ket{\psi}$ of $n$ qubits, the probability 
$p(x) =\abs{\braket{x}{\psi}}^2$ of finding bitstring $x=x_1x_2\cdots x_n$ can be written as
the recursive tree form
\begin{align}
p(x) = \prod_{k=1}^{n} R_{x<k}^{\,1-x_{k}}\, \left(1 -R_{x<k}\right)^{x_{k}} \,,
\label{Eq:recursive}
\end{align}
where the branch ratios $\{R_u\}$ are mutually independent, and each branch ratio $R_u \in [0,1]$
between two equal child subtrees at node $u$ follows the Beta distribution 
\begin{align}
R_u \sim \text{\rm Beta}(K,K)\,.
\label{Eq:Branch_ratio}
\end{align}
Here, $u \equiv x_{<k} = x_1x_2\cdots x_{k-1}$ is the prefix with depth $d=k-1$, and 
$K=2^{n-|u|-1}$ is the size of each child subtree.
\end{theorem}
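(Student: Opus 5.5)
The plan is to work directly with the Gaussian representation of the preceding paragraph. There $p(x)=\abs{c_x}^2/\sum_y\abs{c_y}^2$ with $c_x\sim{\cal CN}(0,1)$ i.i.d., so the variables $g_x=\abs{c_x}^2$ are i.i.d.\ $\text{Gamma}(1,1)$.

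For every prefix $u$ of length $d$, define the subtree mass
\begin{align}
S_u=\sum_{y:\,y_{\le d}=u} g_y\,,
\end{align}
which is a sum of $2^{n-d}$ independent unit Gammas. Hence $S_u\sim\text{Gamma}(2^{n-d},1)$, and $S_u=S_{u0}+S_{u1}$. Define the branch ratio $R_u=S_{u0}/S_u$, the fraction of the mass of node $u$ carried by its left child.

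\textbf{Telescoping identity.} The product formula \eqref{Eq:recursive} follows by telescoping. Along the path $x_{<1}=\emptyset, x_{<2},\dots,x_{<n+1}=x$, each factor $R_{x<k}^{1-x_k}(1-R_{x<k})^{x_k}$ equals $S_{x_{<k+1}}/S_{x_{<k}}$. The product therefore collapses to $S_x/S_\emptyset=g_x/\sum_y g_y=p(x)$. This step is purely algebraic and holds realization by realization.

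\textbf{Marginal law of each ratio.} Apply the standard Gamma--Beta lemma: if $A\sim\text{Gamma}(a,1)$ and $B\sim\text{Gamma}(b,1)$ are independent, then $A/(A+B)\sim\text{Beta}(a,b)$ and it is independent of $A+B$. The children $S_{u0}$ and $S_{u1}$ are sums over disjoint sets of $g$'s, so they are independent $\text{Gamma}(K,1)$ variables with $K=2^{n-|u|-1}$. This gives \eqref{Eq:Branch_ratio}, and in addition $R_u\perp S_u$.

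\textbf{Mutual independence (main obstacle).} Proving that all the $R_u$ are jointly independent is the step needing care. I would do it by induction on the tree from the root downward, or equivalently by an explicit change of variables. Map the $N$ leaf Gammas $(g_x)$ to the $N-1$ ratios $\{R_u\}$ together with the total $S_\emptyset$. The Jacobian is $\prod_u S_u$, with each internal node contributing one factor. Substituting $g_x=S_\emptyset\prod_k(\cdot)$ into the joint density $e^{-\sum g_x}$ gives $e^{-S_\emptyset}$. The Jacobian powers of $S_\emptyset$ and of each $R_u,1-R_u$ then collect into
\begin{align}
S_\emptyset^{N-1}e^{-S_\emptyset}\prod_u R_u^{K_u-1}(1-R_u)^{K_u-1}\,,
\end{align}
which is a product density. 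I expect the main bookkeeping burden to be verifying these exponents.

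As a cleaner alternative to the Jacobian, I would use the lemma recursively. Conditioned on the root, $R_\emptyset$ is independent of $S_\emptyset$. The two halves $\{g_y: y_1=0\}$ and $\{g_y: y_1=1\}$ are independent families. Within each half, the sub-ratios are functions of the normalized vector $g/S_{0}$ (respectively $g/S_1$), which is independent of $S_0$ and $S_1$, hence independent of $R_\emptyset$ as well. Each subtree is again an i.i.d.\ Gamma family of half the size. Induction on $n$ then yields full mutual independence and the stated $\text{Beta}(K,K)$ laws at every depth.
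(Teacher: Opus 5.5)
Your proof is correct. It reaches the result by a more elementary and self-contained route than the paper.

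The paper works entirely on the simplex:
\begin{enumerate}
\item It gets the product form from the chain rule with $R_u=p(x_k=0\mid u)$.
\item It imports the conditional scale invariance $\bigl(p(z\mid u)\bigr)_z\sim\mathrm{Dir}(1,\dots,1)$ from its companion work.
\item It applies Dirichlet aggregation to obtain $R_u\sim\mathrm{Beta}(K,K)$.
\item It asserts mutual independence in one sentence, by ``iterating Dirichlet neutrality down the tree.''
\end{enumerate}

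You instead keep the unnormalized variables $g_x=\abs{c_x}^2$. Your telescoping over subtree masses is the same chain-rule identity. However, the Gamma--Beta lemma applied to the disjoint child sums $S_{u0},S_{u1}$ replaces both the cited scale-invariance result and the aggregation property.

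Your Jacobian computation then gives an explicit product density for $(S_\varnothing,\{R_u\})$, which proves joint independence outright. The Jacobian is $\prod_u S_u$, and each $R_v$ gets exponent $K_v-1$ because there are $K_v-1$ internal nodes below each child of $v$. As a bonus, the computation shows the $R_u$ are independent of the norm $S_\varnothing\sim\mathrm{Gamma}(N,1)$.

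Your recursive alternative is exactly the Gamma-level proof of the neutrality step the paper invokes. The normalized half-vectors are independent of their sums, the two halves are independent, and you induct on $n$. It also makes explicit that the split at $u$ and \emph{both} child conditional vectors are mutually independent, not merely pairwise, which is what the paper's ``iterating down the tree'' silently needs.

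What the paper's formulation buys is brevity given standard Dirichlet facts. It is also intrinsic to the flat Dirichlet law, so it applies to any source of such a vector without auxiliary Gamma variables. What yours buys is independence from the external reference and a complete, checkable independence argument; your normalization, for instance, telescopes correctly to $\Gamma(N)\prod_v B(K_v,K_v)=1$.
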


\begin{proof}[Proof]
The recursive tree expression for $p(x)$, Eqs.~(\ref{Eq:recursive}) and (\ref{Eq:Branch_ratio}), 
follows from three ingredients: the chain rule of probability, the aggregation property 
of the Dirichlet distribution~\cite{Lukacs1955,Kotz2019}, and the exact conditional scale invariance 
of the Dirichlet distribution~\cite{Oh2026}.

First, the chain rule expresses the joint probability $p(x)$ as a product of 
conditionals,
\begin{align}
p(x_1,x_2,\dots,x_n) = \prod_{k=1}^n p(x_k\mid x_1,\dots,x_{k-1})\,.
\end{align}
For every prefix $u \equiv x_{<k}=x_1\cdots x_{k-1}\in \{0,1\}^{k-1}$ at tree depth $d=|u|=k-1$, 
define the branch ratio 
\begin{align}
R_{u} \equiv p(x_k=0\mid u)\,.
\end{align}
The next bit $x_k$ is 0 with probability $R_u$ (left branch) and 1 with probability $(1-R_{u})$ 
(right branch), so the conditional probability reads
\begin{subequations}
\begin{align}
p(x_k\mid x_1,\dots,x_{k-1}) 
&=\begin{cases}
  R_{x<k}\,,     & x_k = 0 \\
  1 -R_{x<k}\,, & x_k =1
\end{cases} \\[6pt]
&= R_{x<k}^{\,1-x_k}\, (1-R_{x<k})^{x_k}\,.
\end{align}
\end{subequations}
Substituting into the chain rule yields Eq.~(\ref{Eq:recursive}).

Second, the exact conditional scale invariance of a random state that we established in Ref.~\cite{Oh2026} 
states that, writing $x=uz$, the conditional probability vector $\bigl(p(z\mid u)\bigr)_z$ again follows a flat Dirichlet 
distribution,
\begin{align}
\bigl(p(z\mid u)\bigr) \sim \mathrm{Dir}(1,\dots,1)_{2^{n-|u|}}\,.
\end{align}
Split the conditional subtree into two halves, $z=0w$ and $z=1w$,
\begin{subequations}
\label{Eq:Split_ratio}
\begin{align}
R_u &= p(0\mid u) = \sum_w p(0w\mid u) \,,\\
1-R_u &= p(1\mid u) = \sum_w p(1w\mid u)\,.
\end{align}
\end{subequations}

Third, since $\bigl(p(z\mid u)\bigr) \sim \mathrm{Dir}(1,\dots,1)$ and each half aggregates 
$K=2^{n-|u|-1}$ leaves, the aggregation property of the Dirichlet distribution~\cite{Lukacs1955,Kotz2019}
gives
\begin{align}
\Bigl( \sum_w p(0w\mid u),\ \sum_w p(1w\mid u) \Bigr) \sim \mathrm{Dir}(K,K).
\end{align}
A two-component Dirichlet is a Beta distribution, so
\begin{align}
R_u \sim \mathrm{Beta}(K,K)\,,\qquad K=2^{n-|u|-1}\,.
\end{align}
Finally, the branch ratios at distinct nodes are mutually independent. By Dirichlet neutrality, 
the split fraction at $u$ is independent of the conditional vector on each child subtree, and iterating 
down the tree makes all $\{R_u\}$ independent. 
This establishes Eqs.~(\ref{Eq:recursive}) and (\ref{Eq:Branch_ratio}).
\end{proof}

As shown in Figs.~\ref{Fig:1} and ~\ref{Fig:2}, Theorem~\ref{Frozen_tree} represents the probability $p(x)$ 
of finding bitstring $x$ for a random state as a binary tree walk from the root to leaves. The universal 
statistics of RCS is the Beta distribution. The full leaf probability $p(x)$ 
follows $\text{Beta}(1,N-1)$, which becomes exponential in the limit of large $N$. More generally, 
for any prefix $u$ (with $x=uw$), the subtree mass $P_u\equiv\sum_w p(uw)$, the weight of one subtree 
relative to the whole tree, follows $P_u\sim\mathrm{Beta}(S_d,N-S_d)$, where $S_d=2^{n-d}$ is
the number of leaves in the subtree at depth $d$~\cite{Oh2026}. For an internal node $u$
with children $u0$ and $u1$, the mass splits as $P_u=P_{u0}+P_{u1}$, and the pair satisfies 
$(P_{u0},P_{u1})\sim\mathrm{Dir}(K,K)$ with $K=S_2/2 = 2^{n-d-1}$. The split ratio in Eq.~(\ref{Eq:Split_ratio}) 
can then be written as $R_u \equiv {P_{u0}}/{P_u} \sim \text{Beta}(K,K)$,
so that $P_{u0} = R_uP_u$ and $P_{u1} = (1-R_u)P_u$.
For $R_u\sim \mathrm{Beta}(K,K)$, the mean and standard deviation are
\begin{align}
\mathbb{E}[R_u] =\frac{1}{2} \,,\quad
\sigma_d =\frac{1}{2\sqrt{2K+1}} \,,
\label{Variance}
\end{align}
Eq.~(\ref{Variance}) shows that the fluctuation of the split ratio depends only on 
the depth $d$ through $K=2^{n-d-1}$. Near the root $(K\gg 1)$ the branching ratio concentrates at $1/2$ 
with vanishing spread, $\sigma_d\to 0$, whereas near leaves $(K\approx 1)$ it is nearly uniform on 
$[0,1]$. Because $\sigma_d$ is a universal, depth-dependent quantity, it distinguishes ideal RCS from noisy RCS. 
As discussed below, we define the branch-ratio fidelity
$ F_\sigma \equiv \sigma_d^{\rm sample}/\sigma_d^{\rm ideal}$
as the ratio of the empirical branch fluctuation of the samples to its ideal value.

\begin{figure}[t]
\includegraphics[width=0.45\textwidth]{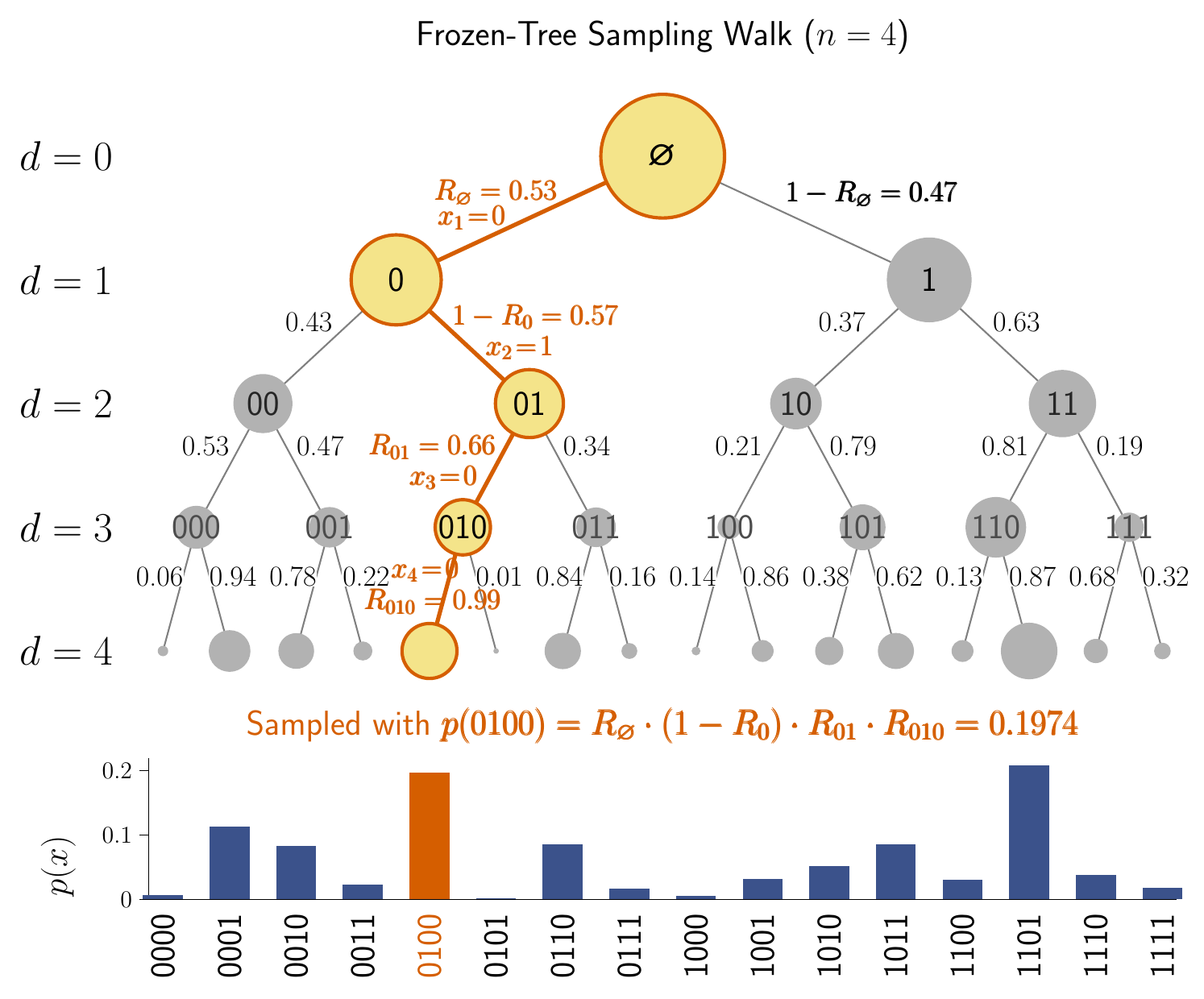}
\caption{Random walks on a frozen tree sample bitstrings from 
$p_F(x)=\abs{\braket{x}{\psi}}^2=\prod_{k=1}^n R_{x<k}^{1-x_k}(1-R_{x<k})^{x_k}$, illustrated here for 
$n=4$ qubits. During the random walk from the root to a leaf, the path at each node moves left with probability 
$R_{x<k}$ (setting  $x_k=0$) or right with probability $1-R_{x<k}$ (setting $x_k=1$). When a node is visited 
for the first time, its branch ratio $R_{x<k}$ is drawn from the Beta distribution, $\text{Beta}(K,K)$ with 
$K=2^{n-d-1}$ and frozen for later revisits. Near the root the branch ratios are close to $1/2$, whereas near 
leaves they fluctuate strongly, as indicated by the node circle sizes. For example, the probability of finding 
$x=0100$ is given by
$p(x=0100) = R_\varnothing\cdot (1-R_0)\cdot R_{01}\cdot R_{010}
= 0.53\times 0.57\times 0.66\times 0.99 \approx 0.197$.
The probability $p_F(x)$ is plotted below.
\label{Fig:2}}
\end{figure}

\paragraph*{Frozen-Tree Sampling Algorithm} ---
Once all split ratios $\{R_u\}$ are fixed, the leaf probability $p(x)$ is uniquely determined and corresponds 
to a single realization of a Dirichlet vector, that is, the Born probability of a random quantum state. 
We therefore propose a frozen-tree sampler that draws each split ratio $R_u$ once, stores it, and reuses 
it whenever a later sample revisits the same node. As shown in Fig.~\ref{Fig:2}, the sampler performs 
a random walk of $n$ steps from the root to a leaf, making one Bernoulli branch decision with probability 
$R_u$ and selecting the bit value $x_k$ at each node. This requires only $O(n)$ steps per sample, rather than 
the exponential resources of storing the full amplitude vector. For $M$ samples the total runtime is $O(Mn)$.
The sampler produces a genuine Dirichlet realization of a random quantum state because the recursive 
Beta rule is precisely the recursive representation of a symmetric Dirichlet vector,
$p_F(x)=\abs{\braket{x}{\psi}}^2 = \prod_{k=1}^n R_{x<k}^{1-x_k}(1-R_{x<k})^{x_k}$. 

As Fig.~\ref{Fig:2} depicts a random walk from the root to a leaf, the algorithm draws one bitstring as follows: 
(1) start at the root $u=\varnothing$ with depth $d=0$; (2) look up the branch ratio $R_u$ if the node has 
not been visited before, draw $R_u\sim\mathrm{Beta}(K,K)$ and store it; (3) sample $x_k=0$ with probability 
$R_u$ or $x_k=1$ with probability $1-R_u$; (4) move to the child node $u0$ or $u1$; (5) repeat until depth 
$d=n-1$. This yields one sample from the fixed distribution $p_F(x)$, where the subscript $F$ denotes 
a single frozen tree with all branch ratios $\{R_u\}$ fixed.  The $2^{n}-1$ branch ratios need not be 
generated in advance. Instead, we use lazy generation: the first time the sampler visits a node $u$, it 
draws $R_u$ and stores it in a dictionary, reusing it on all subsequent visits, so that only the actually 
visited nodes are ever instantiated.

The frozen tree needs one frozen split ratio $R_u$ per node, drawn once and identical every time a sample 
passes through node $u$. Storing every visited ratio incurs a memory cost that becomes prohibitive for 
large $n$. This is resolved by generating $R_u$ on demand, deterministically, from the node's identity 
(its prefix $u$) together with a global seed as
\begin{align}
u \;\xrightarrow{\ \mathrm{PRF(seed},u)\ }\; X_u \sim \mathrm{Unif}[0,1) \;\xrightarrow{\ f_K\ }\; R_u\,,
\end{align}
where the pseudo-random function (PRF) is a keyed, deterministic, stateless map from $(\text{seed},u)$ to a uniform 
word $X_u\in[0,1)$, implemented with a hash function such as SHA-256~\cite{SHA256} or a counter-based random 
number generator (CBRNG)~\cite{Salmon2011}. The map $f_K$ transforms $X_u$
into a Beta variate $R_u\sim\mathrm{Beta}(K,K)$,
\begin{align}
f_K(X_u)=\begin{cases}
\dfrac{1}{2}, & K>2^{102}\\[6pt]
\dfrac{1}{2}+\sigma_K\,\Phi^{-1}(X_u), & 2^{9}\le K\le 2^{102}\\[6pt]
\dfrac{\Gamma_K}{\Gamma_K+\Gamma_K'}, & K<2^{9}
\end{cases}\,,
\end{align}
where $\Phi^{-1}$ is the inverse standard-normal CDF, mapping a uniform variate to a Gaussian one.
Here $\Gamma_K,\Gamma_K'$ are independent $\mathrm{Gamma}(K,1)$ variates. The three regimes exploit 
the depth dependence of the branch-ratio spread $\sigma_K=1/\sqrt{4(2K+1)}$. Near the root $(K>2^{102}$, 
i.e., $n-d>103$), $\sigma_K$ falls below the FP64 machine epsilon, $\sigma_K<2^{-52}$, so $R_u=1/2$ to 
machine precision. No PRF call is needed and the cost is $O(1)$.  In the intermediate regime 
$(2^{9}\le K\le 2^{102}$, i.e., $10< n-d\le 103$), the Beta distribution is Gaussian to high accuracy, 
and a single inverse-CDF evaluation yields $R_u$ at $O(1)$ cost. Near the leaves ($K<2^{9}$, i.e., $n-d\le 10$), 
$R_u$ is drawn from the exact Gamma ratio at $O(K)$ cost, or via Cheng's acceptance-rejection 
algorithm~\cite{Cheng1978} at $O(1)$ cost.

\begin{figure}[t]
\includegraphics[width=0.45\textwidth]{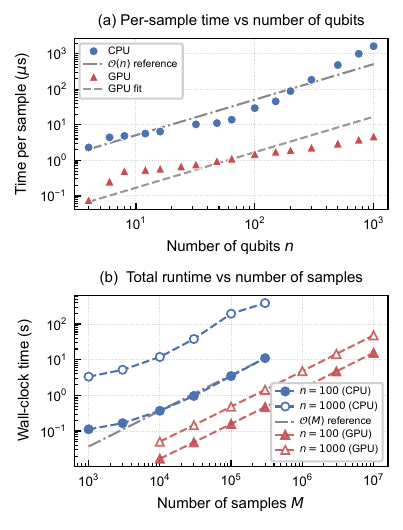}
\caption{Scaling of the frozen-tree sampler on an Intel CPU i7-3770 and an NVIDIA GPU T400. 
(a) The time per sample scales as $O(n)$ with the number of qubits $n$ and  
(b) the wall-clock time scales $O(Mn)$ with the number of samples $M$ for $n=100$ and $1,000$.
\label{Fig:scaling}}
\end{figure}

Fig.~\ref{Fig:scaling} shows the $O(n)$ scaling per sample of the frozen-tree sample code~\cite{Oh_Github}. 
It samples $M=10^7$ bitstrings of an $n=1000$-qubit random state on the GPU in a few seconds. For comparison, 
the Google Sycamore processor required about 200 seconds to collect $\sim\!10^6$ samples at 
$n=53$~\cite{Arute2019,Martinis2022}. The frozen-tree sampler is fast, exact, and noiseless. 
Fig.~\ref{Fig:scaling} (a) confirms the predicted $O(n)$ per-sample time on CPU and GPU.
Fig.~\ref{Fig:scaling} (b) shows the total runtime scaling as $O(Mn)$.

The frozen tree samples bitstrings from the probability $p_F(x)$ of a random state $\ket{\psi}$
characterized by its branching ratios $\{R_u\}$,
\begin{align}
p_F(x)=\abs{\braket{x}{\psi}}^2=\prod_{k=1}^n R_{x_{<k}}^{1-x_k}\bigl(1-R_{x_{<k}}\bigr)^{x_k}\,,
\end{align}
while a random circuit $C$ samples from the probability $p_C(x)$ of another random state 
$\ket{\varphi}=C\ket{0^n}$,
\begin{align}
p_C(x)=\abs{\bra{x}C\ket{0^n}}^2\,.
\end{align}
Both $p_F(x)$ and $p_C(x)$ are independent instances of the flat Dirichlet distribution on the $(N-1)$-simplex 
that any random state must obey. They are therefore statistically identical, $p_F(x)\stackrel{d}{=}p_C(x)$, 
even though $p_F(x)\ne p_C(x)$ in general. The only remaining unknown is the map between a circuit 
$C$ and the branching ratios $\{R_u\}$. Consequently, insofar as RCS is regarded as the task of drawing bitstrings 
from the Dirichlet distribution of a random state, the classical frozen-tree sampler, at $O(n)$
cost per sample, reproduces all statistical properties used to certify RCS. Thus no such statistic can 
serve as evidence of quantum advantage in RCS.

\begin{figure}[t]
\includegraphics[width=0.45\textwidth]{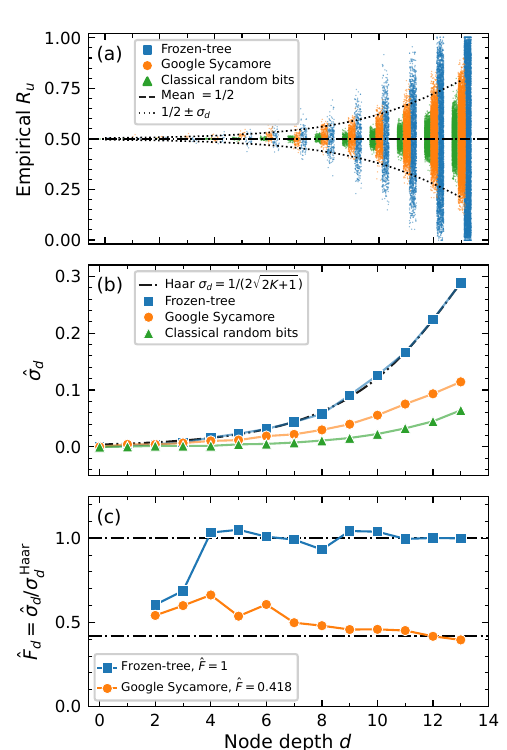}
\caption{(a) Empirical branch ratio $\hat{R}_u$, (b) its standard deviation $\hat{\sigma}_d$, and (c) 
branch-ratio fidelity $\hat{F}_d$ as functions of node depth $d$, for the frozen-tree sample, 
the Google Sycamore sample~\cite{Martinis2022}, and a classical uniform-random-bit sample. 
The number of qubits is $n=14$ and each sample contains $M=500{,}000$ bitstrings.
\label{Fig:branch_ratio}}
\end{figure}

\begin{figure}[t]
\includegraphics[width=0.45\textwidth]{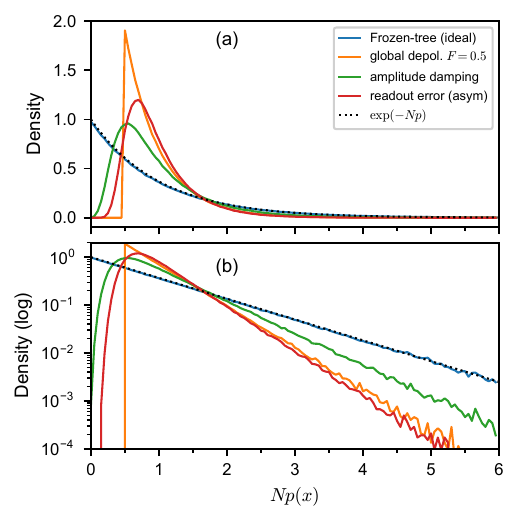}
\caption{(a) Distributions of bitstring probabilities and (b) the same distributions on semi-log scale,
plotted against $Np(x)$, obtained with the noisy frozen-tree sampler under global depolarizing noise, 
amplitude damping, and readout error. The number of qubits is $n=20$. The probabilities $p(x)$ over  
all $2^{20}$ leaves are computed directly from the noisy frozen-tree sampler, rather than estimating 
as empirical frequencies $\hat{p}(x)$ from finite samples. 
\label{Fig5}}
\end{figure}

\paragraph*{Noisy Frozen-Tree Sampler}--- The frozen-tree sampler readily accommodates various noise 
channels: global depolarizing noise, local depolarizing noise, amplitude damping, and readout error,
at the same $O(n)$ cost, with each channel modeled at a distinct stage of the tree. Global depolarizing 
noise is an affine transformation of the Dirichlet vector~\cite{Oh2026}: the noisy leaf probability is 
a mixture of the ideal probability and the uniform distribution,
\begin{align}
\widehat{p}(x) = F\,p_{\text{ideal}}(x) + \frac{1-F}{2^n}\,,
\end{align}
where $F$ is the circuit fidelity. This is exact at any $n$ and requires no noise-trajectory sampling.

Readout error and amplitude damping act qubit-by-qubit on the completed bitstring, after the tree path 
is fixed. A true string $x_k$ is corrupted into an observed string $y_k$ through per-qubit confusion 
matrices ${C_k(y_k \mid x_k)}$, so the observed-string probability is
\begin{align}
\widehat{p}(y) = \sum_{x} p_{\text{ideal}}(x)\prod_{k=1}^n C_k(y_k\mid x_k)\,.
\label{Eq:noisy_tree}
\end{align}
For readout error, the confusion matrix is 
\begin{align}
C_k = \begin{pmatrix} 
      P(0\mid 0) & P(0\mid 1)\\[4pt]
      P(1\mid 0) & P(1\mid  1)
      \end{pmatrix}
    = \begin{pmatrix} 
      1-\epsilon^{01}_k & \epsilon^{10}_k\\[4pt]
      \epsilon^{01}_k   & 1-\epsilon^{10}_k
      \end{pmatrix}\,,
\end{align}
where $P(0\mid 1) =\epsilon^{10}_k$ ($P(1\mid 0) =\epsilon^{01}_k$) is the probability that 1 (0) of the $k$-th qubit 
is incorrectly measured as 0 (1).  Amplitude damping is a one-sided decay from $\ket{1}$ to $\ket{0}$ at rate 
$\gamma_k = 1-\exp(-t_{\text{meas}}/T_{1,k})$, where $T_{1,k}$  is the relaxation time of qubit $k$
and $t_\text{meas}$ is the measurement window. Its confusion matrix is
\begin{align}
C_k =\begin{pmatrix} 
     1 & \gamma_k \\ 
     0 & 1-\gamma_k
     \end{pmatrix}\,.
\end{align}
The exact evaluation of Eq.~(\ref{Eq:noisy_tree}) is feasible only at enumerable sizes $(n\lesssim 25)$. 
For larger $n$, the confusion matrices are instead folded into each conditional branch probability, 
preserving the $O(n)$ cost.

Fig.~\ref{Fig:branch_ratio} shows how the branch ratio $R_u$ fluctuates with node depth $d$ for the 
frozen tree, the Google Sycamore data~\cite{Martinis2022}, and a classical uniform-random-bit sample. 
Fig.~\ref{Fig:branch_ratio} (b) confirms that the frozen-tree spread follows
$\sigma_d=1/(2\sqrt{2K+1})$ exactly, whereas Sycamore lies between ideal RCS and classical random bits, 
with $R_u$ fluctuating more strongly toward the leaves. Fig.~\ref{Fig:branch_ratio} (c) shows 
the frozen tree gives $\hat{F}=1$ across all depths, while Sycamore saturates at $\hat{F}=0.418$.

Figure~\ref{Fig5} plots the distributions $\mathrm{Pr}(p)$ of bitstring probabilities, scaled as $Np$, 
for $n=20$ under global depolarizing noise $(F=0.5)$, amplitude damping $(\gamma_k=0.05)$, and 
readout error ($\epsilon^{01}=0.02, \epsilon^{10}=0.06)$. A notable feature of the frozen-tree 
is that these noisy distributions are obtained by enumerating all $2^n$ leaves and 
applying each channel operator, \textit{without sampling bitstrings}. Because $p(x)$
is given in closed form as a product of branch ratios, the full leaf-probability vector $\pmb{p}=(p(x))$ 
is computed exactly, and each physical noise channel acts as a linear map on it. This contrasts with 
the conventional analysis of RCS, in which the noisy output distribution can only be estimated by 
drawing many bitstrings and histogramming them with estimation error. Like $\mathrm{Pr}(p)$,
every sample-based verification tool, such as cross-entropy, heavy-output generation, 
and anticoncentration, is a functional of $p$ and is therefore exactly determined rather than 
empirically estimated. As shown in Fig.~\ref{Fig5}, amplitude damping and readout error suppress 
the exponential peak and deplete the small $Np$ region relative to the ideal law $\exp(-Np)$, 
whereas global depolarizing noise leaves the exponential shape intact but rescaled and shifted 
to the right by the fidelity $F$~\cite{Oh2026}. Amplitude damping (green) retains a heavier tail than 
either depolarizing noise or readout error, remaining well above them and falling off only near 
$Np\approx6-7$; this is the signature of a one-sided channel, which concentrates probability and preserves 
more of the exponential tail. Depolarizing noise (orange) and readout error (red) decay faster, essentially 
vanishing by $Np\approx 4-5$, since they move probability mass away from the high $Np$ leaves. These are 
the same statistical distortions seen in the Google Sycamore data~\cite{Oh2026}. 
The frozen-tree sampler thus reproduces not only the ideal exponential distribution but also 
its realistic noisy statistics exactly. This implies that no test acting on samples alone can distinguish 
the frozen-tree sampler from a quantum random circuit.

\paragraph*{Summary and Discussion}--- We have shown that the probability 
$p(x)=\abs{\braket{x}{\psi}}^2$ of finding bitstring $x$ in a random state $\ket{\psi}$ is 
represented by a binary tree whose branching ratios $R_u$ follow the Beta distribution $\mathrm{Beta}(K,K)$, 
a direct consequence of the exact conditional scale invariance of random states~\cite{Oh2026,Oh2026b}. 
Through random walks from the root to leaves, the frozen-tree sampler 
draws bitstrings from the Born distribution 
of an $n$-qubit random state in $O(n)$ time per sample. It generates $10^7$ samples at $n=1000$ within 
seconds on a personal computer. It accommodates depolarizing noise, amplitude damping, and readout error 
at the same cost. 

The probability vector $\boldsymbol{p}=(p(x))$ of any random state is a flat Dirichlet vector on 
the simplex. A random circuit produces one such vector and the frozen-tree sampler produces another; 
the two are independent realizations of the same distribution, and statistically identical 
$p_F(x)\stackrel{d}{=}p_C(x)$ while $p_F(x)\ne p_C(x)$. It follows that any verification method
built from the leaf-probability $p$ such as the exponential distribution, anticoncentration, linear XEB, 
and heavy-output generation, takes the same value for the frozen-tree sampler as for a random quantum 
circuit. Since the frozen tree reproduces all of these classically in $O(n)$ time, none of them can 
by itself certify quantum advantage. They are necessary features of the Dirichlet ensemble that 
a classical sampler shares, not sufficient evidence of hardness. 

Any surviving quantum advantage of RCS must therefore reside in the circuit-specific realization 
$p_C(x)$, the map from a given circuit to its amplitudes, rather than in the sampled statistics. 
Whether even this residual survives is the subject of our companion work~\cite{Oh2026c}, which 
constructs a Hurwitz–frozen-tree circuit $C_F$ implementing $p_{C_F}(x)=\abs{\bra{x}C_F\ket{0^n}}^2=p_F(x)$: 
a quantum circuit whose output is, by construction, classically sampled in $O(n)$ time. 
Such a family shows that a circuit can pass the RCS benchmark while admitting an efficient classical sampler, 
underscoring that the benchmark's statistics do not witness advantage. Taken together, these results argue 
that random circuit sampling, as currently verified, is not a sound stand-alone benchmark for quantum advantage.

\bibliographystyle{apsrev4-2}
\bibliography{Frozen_Tree.bib}
\vfill
\vspace{3cm}
\break\newpage
\appendix*
\section{Appendix: Verification statistics as functionals of the leaf law}
The exact-enumeration results of the main text are limited to $n\lesssim 25$, where the full $2^n$ probability 
vector can be constructed. To confirm that the frozen-tree sampler reproduces the exponential leaf statistics 
in the regime relevant to hardware experiments, we draw $10^{6}$ leaves uniformly from the $2^{n}$ possible
bitstrings at $n=30$, $40$, and $50$, and evaluate each leaf probability $p(x)$ in closed form from 
its branch ratios. Because $p(x)$ is computed directly rather than estimated from finite-frequency counts, 
no sampling error enters the individual probabilities; the only randomness is in which leaves are drawn. 
As shown in Fig.~\ref{Fig:randleaf}, the histograms of the rescaled variable $Np(x)$ coincide with the ideal law
$\exp(-Np)$ at every size, even though $10^{6}$ leaves represent a vanishing fraction---from $10^{-3}$ at $n=30$ 
down to $10^{-9}$ at $n=50$---of the full distribution. The distribution $\Pr(Np)$ shown here is built from 
this finite set of $10^{6}$ computed values of $p(x)$, far smaller than the total number of leaves $2^{n}$
($10^{6}\ll 2^{50}\approx 10^{15}$ at $n=50$), yet the exponential law is fully recovered. The leaf law is 
therefore a property of the flat Dirichlet ensemble, recovered by the $O(n)$ sampler at scales where
direct enumeration is infeasible.

\begin{figure}[h]
\centering
\includegraphics[width=\columnwidth]{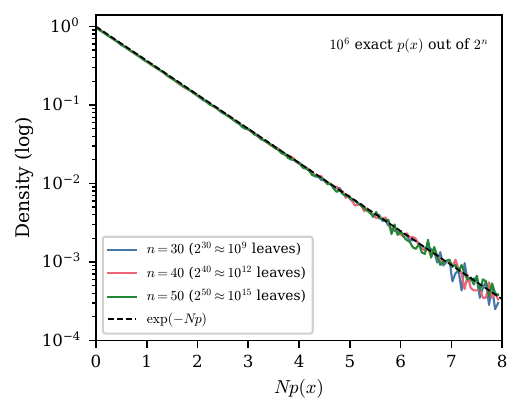}
\caption{Validation that the frozen-tree sampler reproduces the exponential law $\exp(-Np)$ at large $n$
where the full $2^n$ amplitude vector cannot be enumerated. For $n=30$, $40$, and $50$, the sampler draws
$10^{6}$ random leaves and evaluates each leaf probability $p(x)$ in closed form from its branch ratios; the
histograms of the rescaled variable $Np(x)$ coincide with the ideal exponential law (dashed) at every $n$.
The distribution is thus $n$-independent in $Np$: the mean is $\langle Np\rangle = 1.00$ and the
high-probability tail fraction $\Pr(Np>4)\approx 0.018$ for all three
sizes, confirming that the sampled statistics track the flat
Dirichlet leaf law rather than the Hilbert-space dimension.}
\label{Fig:randleaf}
\end{figure}
\end{document}